\theoremstyle{plain}
\newtheorem{thm}{Theorem}[section]
\newtheorem{prop}[thm]{Proposition}
\newtheorem{cor}[thm]{Corollary}
\theoremstyle{definition}
\newtheorem{defn}[thm]{Definition}
\newtheorem{ex}[thm]{Example}
\theoremstyle{remark}
\newtheorem*{rmk}{Remark}
\newcommand{\PP}{\mathbb P}
\newcommand{\prob}{\mathbb P}
\newcommand{\EE}{\mathbb E}
\newcommand{\MRCA}{\text{MRCA}}
\newcommand{\tc}{\text{:}} 
\title{Species tree inference by the \textsl{STAR} method, and generalizations}
\date{\today}                    
\author{E.S. Allman}
\address{Department of Mathematics and Statistics, University of Alaska Fairbanks, PO Box 756660, Fairbanks, AK 99775 USA}
\email{e.allman@alaska.edu}
\author{J. H. Degnan}
\address{Department of Mathematics and Statistics,
              University of Canterbury
              Private Bag 4800,
              Christchurch, New Zealand}
\email{J.Degnan@math.canterbury.ac.nz}
\author{J. A. Rhodes}
\address{Department of Mathematics and Statistics, University of Alaska Fairbanks, PO Box 756660, Fairbanks, AK 99775 USA}
\email{j.rhodes@alaska.edu}
\begin{document}

\maketitle

\begin{abstract} The multispecies coalescent model describes the generation of gene trees
from a rooted metric species tree, and thus provides a framework for the inference of species trees from sampled gene trees.
We prove that the STAR method of Liu et al., and generalizations of it, are
statistically consistent methods of topological species tree inference under this model.  
We discuss the impact of gene tree sampling schemes for species tree inference
using generalized STAR methods, and reinterpret the original STAR as a consensus method based on
clades.
\end{abstract}


\section{Introduction}
While not always emphasized in practical analyses of biological
sequences, the gene trees which standard phylogenetic methods infer
may differ from the species tree relating the taxa from which the
samples are taken. A frequent reason for this incongruence between
gene trees and species trees is the population genetic effect of
incomplete lineage sorting. Although this source of incongruence has
been recognized for many years, recently methods have been proposed to
bring a statistical treatment of it into data analysis, through the
framework of the multispecies coalescent model
\citep{rannala2003,liu2007,kubatko2009,LiuSTAR,heled2010,Larget2010}.
However, the subject is still young, with both theoretical issues
concerning some of the methods not fully established, and practical
understanding of their appropriate application not yet developed.

\medskip

Here we focus on the STAR method, of 
\citet{LiuSTAR}, for inferring a rooted topological species tree from
rooted topological gene trees. It is a fast approach to species tree
construction that shows promise of good performance both in
simulations \citep{LiuSTAR} and in empirical studies \citep{Lee2012}.

Since STAR bases its inference on topological gene trees, it discards
all metric information one might have on them. While at first this may
seem wasteful, there are good reasons one might prefer such an
approach. In practice, the gene trees are likely to be themselves
inferred by standard phylogenetic methods, so their edge lengths are
often considered to be less reliably known than their topological
structure. Moreover, modeling the evolution of sequences on gene trees
that were produced under the coalescent model requires reconciling
time units on gene trees (reflecting total amounts of substitution),
with time units on the species tree (reflecting population size and
true time). This is usually resolved in part by assuming a molecular clock,
thus forcing gene trees to be ultrametric. Since variations in
base substitution rates along gene lineages can be large, and inferred
metric gene trees are often far from ultrametric, it is unclear what
impact this approach has on data analysis. The authors of the
Bayesian inference software BEST have, in fact, warned that violations
of the molecular clock could have `significant impacts'
\citep{BESTbookChap}.  By using only topological gene tree topologies,
STAR circumvents such problems.

STAR proceeds by first combinatorially encoding each topological gene
tree by a distance matrix. Averaging these over all gene trees
produces a distance matrix that is then used to build a species
tree. Finally, the metric information on the inferred species tree is
discarded, so only its topology is retained. Though \citet{LiuSTAR}
suggest UPGMA or Neighbor Joining (NJ) for this step, in fact any
distance method, including ones based on optimality criteria such as
Minimum Evolution, could be used. That an algorithmic tree
construction process such as Neighbor Joining is fast accounts only in part
for the speed of STAR. Its avoidance of calculating the
theoretical gene tree probabilities that would be needed in either
Maximum Likelihood or Bayesian analysis is also a major factor in its
speed, so use of other distance methods can still be expected to have
computational advantages.

\medskip

Although not investigated by \citet{LiuSTAR}, it is a rather
surprising fact that the expected distance matrix obtained by STAR
applied to the theoretical gene tree distribution under the
multispecies coalescent model will exactly fit a rooted tree with the
same topology as the species tree. That this is true is crucial to the
claim that STAR is a statistically consistent method of inference.  In
this work we prove this claim, and as a consequence establish the
statistical consistency of the STAR method when any of a large class
of distance methods are used for the final selection of the species
tree.

In the course of doing this, we also show that the method can be
generalized into a parameterized family of STAR methods that are all
consistent. The parameters specify a \emph{node numbering scheme}
which controls the way in which gene trees are encoded in distance
matrices. (Although the particular node numbering scheme of the original method is referred
to as a `ranking' by \citet{LiuSTAR}, we avoid that terminology as it
conflicts with the common use of the term `ranked tree' to indicate
that internal nodes are assigned distinct values, often to indicate
the temporal ordering of branching events in different parts of the
tree \citep{SempleSteel}.)  The original STAR method
arises from one particular choice of these parameters.

Our work also shows that under the coalescent the expected distance
tables obtained in all generalized STAR methods are ultrametric. This
is important for understanding the behavior of UPGMA, Neighbor
Joining, or any distance method when applied to the expected distance
table, a point that was not addressed by \citet{LiuSTAR}. It also
suggests that in the final selection of the species tree from the
average distances, it might be preferable to use a method that
enforces ultrametricity.

In closing, we show that the standard STAR method can be viewed as
using only the clade structure of the gene trees.  This observation
allows a reinterpretation of the STAR distance as one by which taxa
are viewed as close if they appear together in many clades across the
gene trees. The STAR method can thus be viewed as a type of greedy
consensus method using the clade structure of gene trees to infer the
species tree.  That it is a consensus method attuned in particular to the
    species tree/gene tree problem is the import of the fact that STAR
    is consistent under the coalescent model.

Although our emphasis in this paper is on theory, we also discuss some
possible implications for data analysis, focusing on various sampling
schemes and how generalizations of the STAR method might possibly be
used to gain a measure of confidence in species tree estimates.  
Additional work is needed to further develop the suggestions here.

\medskip

\section{The standard STAR Node Numbering Scheme and Variants}

The encoding of an $n$-taxon binary rooted gene tree topology by a
distance matrix in STAR is accomplished through first associating
numbers to each node of the tree. The \emph{standard node numbering
  scheme} used by STAR is as follows: Assign $n$ to the root of the
tree.  Then if an internal node has been assigned a number $m$, assign
to each of its children that are internal nodes the number $m-1$.
These numbers are then interpreted as distances from the leaves in an
ultrametric tree, so that the distance between two leaves is twice the
number assigned to their most recent common ancestor.
In Figure \ref{fig:numbering}, for example, the
  distance between genes $A$ and $C$ is $6$ on the left tree,
  and $10$ on the right tree.

\begin{figure}[t]
\begin{picture}(45.0,170)(0,0)
\put(-150,0){
\includegraphics[width=.48\textwidth]{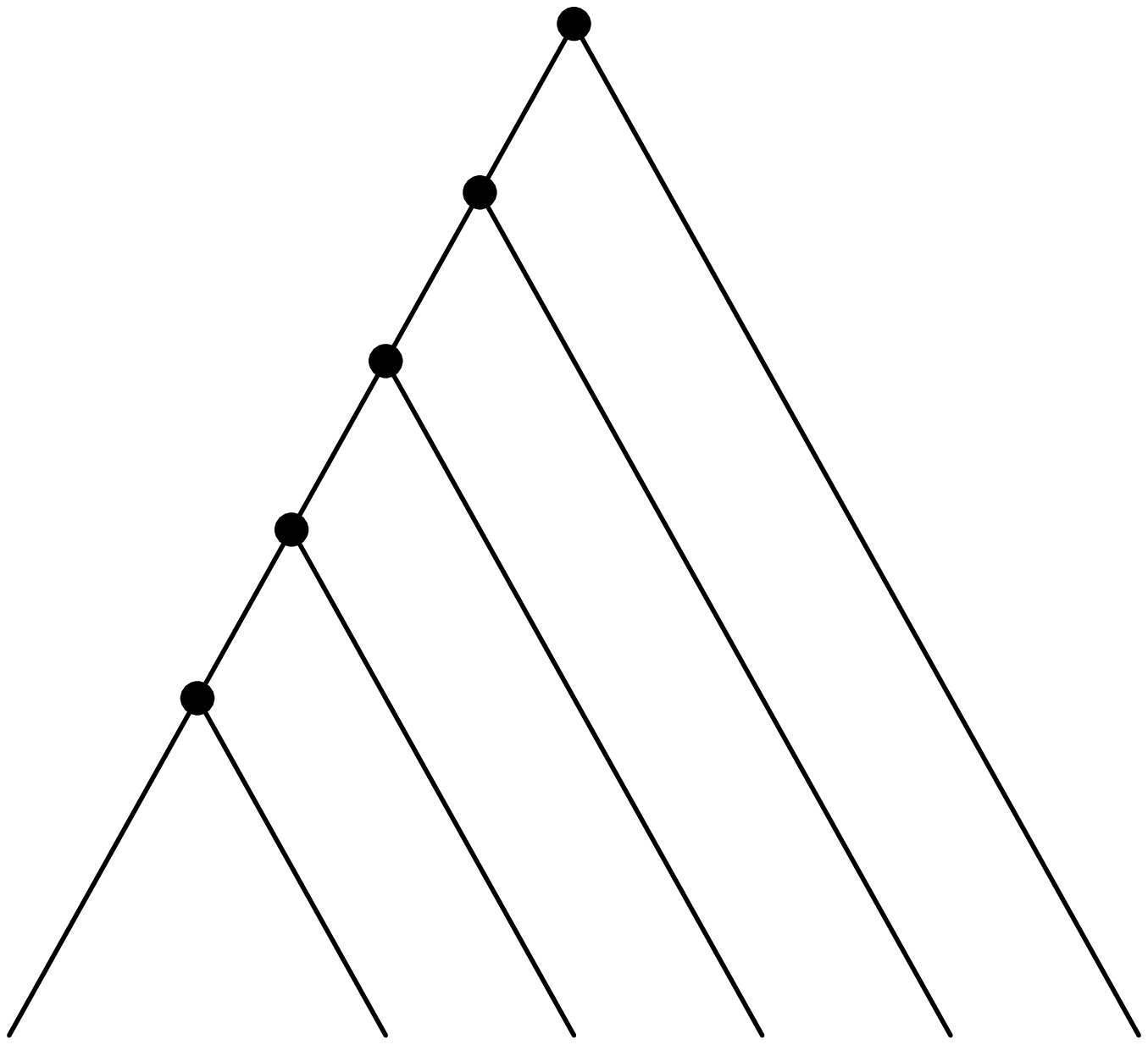}}
\put(10,0){\includegraphics[width=.48\textwidth]{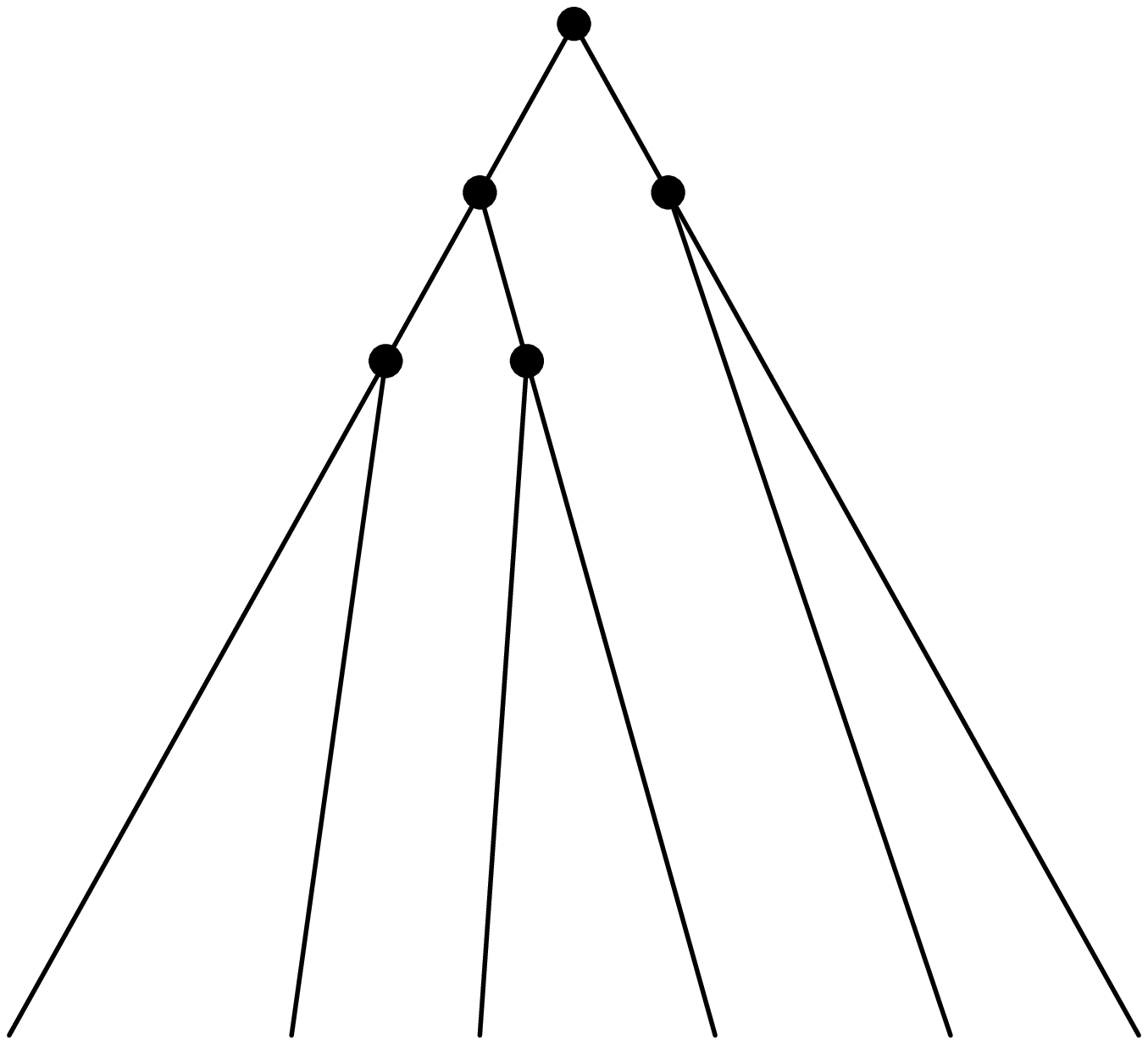}}
 \put(-69,155){\fontsize{11.23}{14.07}\selectfont   \makebox(600.0, 0.0)[l]{$a_0$ \strut}}
  \put(-80,135){\fontsize{11.23}{14.07}\selectfont   \makebox(600.0, 0.0)[l]{$a_1$ \strut}}
   \put(-91,115){\fontsize{11.23}{14.07}\selectfont   \makebox(600.0, 0.0)[l]{$a_2$ \strut}}
    \put(-103,95){\fontsize{11.23}{14.07}\selectfont   \makebox(600.0, 0.0)[l]{$a_3$ \strut}}
     \put(-115,75){\fontsize{11.23}{14.07}\selectfont   \makebox(600.0, 0.0)[l]{$a_4$ \strut}}
      \put(88,155){\fontsize{11.23}{14.07}\selectfont   \makebox(600.0, 0.0)[l]{$a_0$ \strut}}
  \put(77,135){\fontsize{11.23}{14.07}\selectfont   \makebox(600.0, 0.0)[l]{$a_1$ \strut}}
   \put(66,115){\fontsize{11.23}{14.07}\selectfont   \makebox(600.0, 0.0)[l]{$a_2$ \strut}}
    \put(118,135){\fontsize{11.23}{14.07}\selectfont   \makebox(600.0, 0.0)[l]{$a_1$ \strut}}
     \put(99,115){\fontsize{11.23}{14.07}\selectfont   \makebox(600.0, 0.0)[l]{$a_2$ \strut}}
     \put(-127,20){\fontsize{11.23}{14.07}\selectfont   \makebox(600.0, 0.0)[l]{$A$ \strut}}    
        \put(-82,20){\fontsize{11.23}{14.07}\selectfont   \makebox(600.0, 0.0)[l]{$B$ \strut}}    
             \put(-59,20){\fontsize{11.23}{14.07}\selectfont   \makebox(600.0, 0.0)[l]{$C$ \strut}}    
                \put(-37,20){\fontsize{11.23}{14.07}\selectfont   \makebox(600.0, 0.0)[l]{$D$ \strut}}
                  \put(-15,20){\fontsize{11.23}{14.07}\selectfont   \makebox(600.0, 0.0)[l]{$E$ \strut}} 
                    \put(7,20){\fontsize{11.23}{14.07}\selectfont   \makebox(600.0, 0.0)[l]{$F$ \strut}}
                     \put(30,20){\fontsize{11.23}{14.07}\selectfont   \makebox(600.0, 0.0)[l]{$A$ \strut}}    
        \put(62,20){\fontsize{11.23}{14.07}\selectfont   \makebox(600.0, 0.0)[l]{$B$ \strut}}    
             \put(84,20){\fontsize{11.23}{14.07}\selectfont   \makebox(600.0, 0.0)[l]{$C$ \strut}}    
                \put(113,20){\fontsize{11.23}{14.07}\selectfont   \makebox(600.0, 0.0)[l]{$D$ \strut}}
                  \put(140,20){\fontsize{11.23}{14.07}\selectfont   \makebox(600.0, 0.0)[l]{$E$ \strut}} 
                    \put(165,20){\fontsize{11.23}{14.07}\selectfont   \makebox(600.0, 0.0)[l]{$F$ \strut}}
\end{picture}
\caption{ Two 6-taxon gene trees with a STAR node-numbering.  Both
  trees are drawn ultrametricly according to the standard STAR numbering
  with $(a_0,a_1,a_2,a_3,a_4) = (6,5,4,3,2)$.  Note that only the
  caterpillar tree shape (left) has distinct numberings for each node.
  For a generalized node numbering, the $a_i$ can be any non-negative
  decreasing sequence with at least two distinct terms.  Node numbers
  are interpreted as remetrizing the gene tree, by specifying
  distances of the nodes from their descendant leaves.}
\label{fig:numbering}
\end{figure}

An alternative viewpoint is that the standard STAR metrization of a gene tree is 
obtained from the rooted gene tree topology by assigning all internal edges length 1, 
and then choosing lengths of pendant edges so that all leaves are distance $n$ from the root.

It is immediately clear that some trivial variations on this numbering scheme will have no effect 
on the output of STAR. For instance, suppose instead of assigning the root the number $n$, one 
assigned to it any number $\ell \ge n-2$, but otherwise followed the standard scheme. Then all 
internal branches would still have length 1, and pendant ones would have non-negative length, 
regardless of the topology of the gene tree. Indeed pendant edge lengths would only change 
by the addition of $\ell-n$. The net effect on the pairwise distance matrix for each gene tree 
is to add  $2(\ell-n)$ to each off-diagonal entry. 
Thus the empirical average of these matrices over a sample of gene trees, and the theoretical 
expected distance matrix change in the same way. (Note that these distance matrices will 
still have non-negative off-diagonal entries.) It is easy to see that this will have no effect on the 
topology of the species tree output by methods such as UPGMA and NJ. Similarly, assuming 
one chooses a sufficiently large number to assign to the root, one could number descendant 
nodes by any constant decrement from their parents, with no effect on the output of the 
topological species tree by UPGMA or NJ, since this would merely change the average 
distance matrix by a rescaling and addition of a constant.
  
Thus assuming the output of  standard STAR from the  theoretical gene tree distribution 
does agree topologically with the species tree, there are certainly some variations on the 
numbering scheme that will have the same property. Less obvious variations of node 
numberings that can also be used for species tree inference in a generalized STAR 
method are the focus of the next section.

\medskip

By way of contrast, one might consider the following seemingly natural node numbering: 
Assign to each node the count of its leaf descendants.
That this
`descendant-count' numbering scheme
behaves differently from the STAR scheme is seen by considering the 
$5$-taxon caterpillar species tree $$\sigma=((((a,b)\tc x,c)\tc y,d)\tc z,e),$$ 
where $x,y,z$ are in coalescent units. Taking $x=z=\infty$ and $y=0$, the only 
gene trees with non-zero probability under the coalescent model are
\begin{align*}
T_1=((((a,b),c),d),e),&\ \ \PP_\sigma(T_1)= 1/3, \\
T_2=((((a,b),d),c),e),&\ \ \PP_\sigma(T_2)= 1/3,  \\
T_3=(((a,b),(c,d)),e),&\ \ \PP_\sigma(T_3)=  1/3. 
\end{align*}
Using these probabilities as weights in averaging the distance matrices 
$$D_{T_1}=\begin{pmatrix}0&4&6&8&10\\4&0&6&8&10\\6&6&0&8&10\\8&8&8&0&10\\10&10&10&10&0\end{pmatrix},  \hskip .5cm
D_{T_2}=\begin{pmatrix}0&4&8&6&10\\4&0&8&6&10\\8&8&0&8&10\\6&6&8&0&10\\10&10&10&10&0\end{pmatrix},$$
$$D_{T_3}=\begin{pmatrix}0&4&8&8&10\\4&0&8&8&10\\8&8&0&4&10\\8&8&4&0&10\\10&10&10&10&0\end{pmatrix},$$
we see
$$\EE_\sigma(D_T)= \begin{pmatrix}0&4&22/3&22/3&10\\4&0&22/3&22/3&10\\22/3&22/3&0&20/3&10\\22/3&22/3&20/3&0&10\\10&10&10&10&0\end{pmatrix}.$$
This expected distance matrix exactly fits the rooted tree
$$(((a\tc 2,b\tc 2)\tc 5/3, (c\tc 10/3, d\tc 10/3)\tc 1/3)\tc 4/3,  e\tc 5)$$
which is topologically different from the species tree, even if both are treated as unrooted. 
Similar examples with positive and finite edge lengths can easily be constructed,
by  taking $x,z$ to be large while $y$ is small, since the expected distance matrix depends continuously on the edge lengths.

If, instead, the standard STAR numbering were used in the above example, 
$D_{T_1}$ and $D_{T_2}$ would be unchanged, but in $D_{T_3}$  all entries 
of $4$ would be replaced by $6$. How this effects $\EE_\sigma(D_T)$, and 
thus changes the tree it fits, we leave as an easy exercise for the reader. 

\section{Consistent node numbering schemes }\label{sec:consistnode}

We next show that any node numbering scheme with several simple
properties, when used in STAR with UPGMA, NJ, or any well-behaved
distance method for the selection of the species tree, leads to
consistent inference. By `well-behaved' we simply mean one that when
applied to any distance table in some neighborhood of one exactly
fitting $T$ will return the topology of $T$ .  

The properties we require for a node numbering scheme are:
\begin{enumerate}

\item \label{item:anc} The number assigned to any internal node on a
  gene tree depends only on the count of nodes between it and the root
  in the topological gene tree. Thus the node numbers on all gene
  trees come from a common \emph{node numbering sequence}
  $a_0,a_1,\dots, a_{n-2}$, with $a_0$ assigned to all roots, $a_1$ to
  their children, and more generally $a_i$ to all internal nodes at
  depth $i$ from the root.
 
\item \label{item:dec} A non-negative number is assigned to every
  node, with the number assigned to a child always less than or equal
  to that assigned to its parent, with at least one instance of 
strict inequality.  That is, $a_0\ge a_1\ge \dots \ge a_{n-2}\ge a_{n-1}
 = 0$ with some $a_\ell>a_{\ell+1}$ for some $\ell$.
 
\end{enumerate}

Equivalently, one could define an \emph{edge length sequence}
$b_1,b_2,\dots, b_{n-1}$, where $b_i\ge 0$, and at least one $b_i>0$,
and then assign length $b_i$ to all {internal} edges at depth $i$ from
the root. If a {pendant} edge is at depth $i$, it is assigned the
length $\sum_{j\ge i} b_j$.  This is related to a node numbering
sequence by defining $b_i=a_{i-1}-a_i$.

Both of these properties are natural. Property \eqref{item:dec}
ensures all internal branch lengths are non-negative, with at least
one positive, while property \eqref{item:anc} ensures a connection to
the number of coalescent events in a gene's lineage that occur above
(\emph{i.e.}, temporally before) the specified node.

Note that the standard STAR scheme is simply the special choice of the node
numbering sequence
$n,n-1,n-2,\dots, 2$, or, equivalently, of
the edge length sequence $1,1,1,\dots, 2$.
On the other hand, the
descendant-count
scheme mentioned in the previous section
fails to satisfy property \eqref{item:anc}. Though it is
tied to counting coalescent events, that scheme counts events
\emph{below} the node.

\medskip

We consider first a generalized STAR method where for each gene
exactly one individual is sampled per taxon. 
Extensions to sampling multiple individuals will be
discussed later.

\medskip

\begin{defn} A $n\times n$ numerical matrix, with rows and columns
  indexed by an $n$-element set of taxa $\mathcal X$, \emph{weakly fits} a
  topological tree $\psi$ on $\mathcal X$ if there is some assignment of
  non-negative lengths to the edges of $\psi$ such that the resulting
  pairwise tree distances between leaves give the entries of the
  matrix. If the edge lengths are positive, then we say the matrix
  \emph{strongly fits} $\psi$. If the assignment of edge lengths
  yields an ultrametric tree, we say the matrix fits $\psi$
  \emph{ultrametricly}.
\end{defn}

For a complete presentation of the multispecies coalescent model,
which describes the generation of gene trees from a metric species
tree, we refer to earlier works, for instance \citep{adr2011a}. The
primary feature of the coalescent model that we use is the 
\emph{exchangeability of lineages}: for a collection of lineages 
present in a population at a particular time, 
the probability of any pattern of coalescence (moving
backwards in time) of these lineages is the same for all permutations
of the lineages.

We denote a species tree on a set of taxa $\mathcal X$ by
$\sigma=(\psi,\lambda)$, where $\psi$ is a rooted topological tree and
$\lambda$ is a vector of edge lengths measured in coalescent units. We use lower case letters
$i,j,k$ to denote taxa, and the corrresponding upper case letters
$I,J,K$ to denote gene samples from those taxa.
For any $\mathcal Y\subseteq \mathcal X$, by the most recent common ancestor
of $\mathcal Y$, MRCA$(\mathcal Y)$, we mean the vertex on $\psi$ that is
ancestral to all elements of $\mathcal Y$, and a descendant of any
other vertex ancestral to all of $\mathcal Y$.  Thus $\mathcal Y$ is
always a subset of the set of descendants of MRCA$(\mathcal Y)$.
An MRCA on a gene tree is defined similarly.

\smallskip

Our main result is the following.

\begin{thm}\label{thm:dist}
  Under the multispecies coalescent model on a metric species tree
  $\sigma=(\psi,\lambda)$, let $\mathbb P_\sigma(T)$ denote the
  probability of a rooted topological gene tree $T$. For a node
  numbering sequence satisfying properties \eqref{item:anc} and
  \eqref{item:dec}, let $D_T$ denote the pairwise distance matrix
  obtained from applying the sequence to $T$. Then for all
  $\lambda_i\in(0,\infty)$ the expected value of this matrix,
$$\EE_\sigma (D_T)= \sum_T \PP_\sigma(T) D_T$$
is a pairwise distance matrix that strongly fits $\psi$ ultrametricly. 
\end{thm}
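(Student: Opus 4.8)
The plan is to reduce the statement to two facts about $h_{ij}:=\tfrac12\,\EE_\sigma(D_T)_{ij}$. Writing $d_T(I,J)$ for the depth of $\MRCA_T(I,J)$ (its number of edges to the root), property \eqref{item:anc} gives $(D_T)_{ij}=2\,a_{d_T(I,J)}$, so $h_{ij}=\EE_\sigma[a_{d_T(I,J)}]$ is the expectation of a non-increasing function of the random depth $d_T(I,J)$. I claim it suffices to prove (A) that $h_{ij}$ depends only on $v=\MRCA_\psi(i,j)$, and (B) that the resulting function $H(v):=h_{ij}$, extended by $H\equiv 0$ on leaves, satisfies $H(w)<H(v)$ whenever $w$ is a child of $v$. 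Given (A) and (B), assigning to the edge of $\psi$ above each non-root node $c$ the length $H(\mathrm{parent}(c))-H(c)>0$ realizes $\psi$ as an ultrametric tree with leaves at height $0$, each internal node $v$ at height $H(v)$, and leaf-to-leaf distances $2H(\MRCA_\psi(i,j))=\EE_\sigma(D_T)_{ij}$; since $a_0>0$ and $\PP_\sigma(d_T(I,J)=0)>0$ force $H>0$ at every cherry, all edge lengths are positive, so $\EE_\sigma(D_T)$ strongly fits $\psi$ ultrametricly.

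For (A) I would use the exchangeability of lineages within a population. Reading the coalescent backward in time, $d_T(I,J)$ is one less than the number of lineages present just after the $I$--$J$ coalescence, and every coalescence it counts occurs in a population at or ancestral to $v$. Because $i$ and $j$ lie in different child subtrees of $v$, the lineages $I$ and $J$ cannot meet below $v$ and so enter the population $v$ as two distinct lineages among the $n_v$ lineages ancestral to the taxa of the clade of $v$; the distribution of $n_v$ is determined by $\sigma$ and $v$, not by the chosen pair. Conditioned on which lineages enter $v$, exchangeability makes the coalescent in the populations at and above $v$ symmetric in all entering lineages, so the law of ``the number of coalescences more ancient than the join of two prescribed entering lineages'' is the same for any two of them. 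As every pair with $\MRCA_\psi=v$ supplies two such distinct marked lineages, the whole distribution of $d_T(I,J)$, hence $h_{ij}$, depends only on $v$.

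For (B), fix a child $w$ of $v$, pick $i,j'$ in the clade of $w$ with $\MRCA_\psi(i,j')=w$, and pick $k$ in the sibling subtree so that $\MRCA_\psi(i,k)=v$. By (A), $H(w)=\EE_\sigma[a_{d_T(I,J')}]$ and $H(v)=\EE_\sigma[a_{d_T(I,K)}]$, so I must show $\EE_\sigma[a_{d_T(I,J')}-a_{d_T(I,K)}]<0$. I split on whether $I$ and $J'$ coalesce inside the edge above $w$ (Case I) or not (Case II). In Case II the lineages $I,J',K$ enter $v$ as three distinct marked lineages and both the $I$--$J'$ and $I$--$K$ joins occur at or above $v$, so $d_T(I,J')$ and $d_T(I,K)$ are functions of the coalescent on the populations at and above $v$; since Case II is an event of the process strictly below $v$, the Markov property up the tree leaves that upper process exchangeable in the entering lineages, and relabeling $J'\leftrightarrow K$ is a measure-preserving involution that negates $a_{d_T(I,J')}-a_{d_T(I,K)}$, so Case II contributes $0$. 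In Case I the pair $I,J'$ has already joined before either meets $K$, forcing the gene-tree triple $((I,J'),K)$ and hence $d_T(I,J')\ge d_T(I,K)+1$, so the integrand is $\le 0$ there. Thus $H(w)-H(v)=\EE_\sigma[\,a_{d_T(I,J')}-a_{d_T(I,K)}\,;\,\text{Case I}\,]\le 0$.

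I expect the main obstacle to be the strict inequality in (B). Case I has positive probability because the edge above $w$ has finite positive length, giving $I$ and $J'$ a positive chance to coalesce there; and because every binary gene-tree topology has positive probability under the coalescent when all $\lambda_i\in(0,\infty)$, there is a positive-probability sub-event of Case I on which $d_T(I,K)$ and $d_T(I,J')$ straddle an index $\ell$ with $a_\ell>a_{\ell+1}$ (one exists by property \eqref{item:dec}), making the integrand strictly negative and yielding $H(w)<H(v)$. The most delicate conceptual point is the cancellation in Case II: one must verify that conditioning on the below-$v$ event does not bias the exchangeable law of the coalescent above $v$. This is exactly where property \eqref{item:anc} is essential---it is precisely because the numbering counts events \emph{ancestral} to a node that $d_T(I,J)$ is a function of the coalescent above the join, and the descendant-count scheme, which counts events below the node, fails at this very step.
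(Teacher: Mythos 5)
Your overall strategy is the paper's own argument in different packaging. Your Case I/Case II split is exactly the paper's decomposition according to whether $I,J$ coalesce below the relevant species-tree node; your Case II cancellation is the paper's exchangeability argument for equation \eqref{eq:eq3}; and your strictness argument (a realizable gene tree whose two MRCA depths straddle an index $\ell$ with $a_\ell>a_{\ell+1}$) is precisely the paper's justification of inequality \eqref{eq:eq2}. The only real difference is presentational: you build the ultrametric fit explicitly from a height function $H$ on the nodes of $\psi$, where the paper verifies the $3$-point condition and cites it; your claim (A) also absorbs the non-binary case, which the paper handles as a separate final remark. That reformulation is fine, and arguably cleaner.

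There is, however, one concretely false step. You assert that $d_T(I,J)$ ``is one less than the number of lineages present just after the $I$--$J$ coalescence,'' and it is this temporal count that you then feed into the exchangeability argument (``the number of coalescences more ancient than the join''). The identification is wrong: with four samples and coalescence order $A$--$B$ first, then $C$--$D$, then the final join, the gene tree is $((A,B),(C,D))$, so $\MRCA_T(A,B)$ has depth $1$, while your formula gives $3-1=2$. The depth of $\MRCA_T(I,J)$ is not the temporal rank of the $I$--$J$ coalescence among all coalescences; it is the number of subsequent coalescences in which the merged $I$--$J$ lineage itself participates, i.e., the number of gene-tree ancestors of that node. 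As written, your inference ``same law for the temporal count $\Rightarrow$ same law for $d_T(I,J)$'' is therefore unsound. The repair is immediate and leaves the rest of your proof intact: the gene-tree depth of the join of two marked entering lineages is itself a functional of the coalescent process at and above $v$ that is equivariant under relabeling of the entering lineages, so exchangeability applies to it directly, which gives (A); the same observation is what legitimizes the relabeling involution $J'\leftrightarrow K$ in your Case II. (Your Case I argument already uses the correct, topological notion of depth, so it needs no change.)
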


Of course the pairwise distance matrix $\EE_\sigma (D_T)$ of this
theorem does not generally match the pairwise distances on the species
tree $\sigma$.

\begin{proof}[Proof of Theorem \ref{thm:dist}] For a binary
  topological species tree $\psi$ with positive edge lengths
  $\lambda_i$, it is enough to show that, for any three taxa $i,j,k$,
  if $\psi$ displays the rooted triple $((i,j),k)$, then with
  $D=\EE_\sigma (D_T)$,
\begin{equation*}D(i,j)<D(i,k)=D(j,k).
\end{equation*}
This is simply the $3$-point condition for a distance to determine an ultrametric tree \citep{SempleSteel}.

\smallskip

So suppose $\psi$ displays $((i,j),k)$, and let $w=\text{MRCA}(\{i,j,k\})$ on $\sigma$.  Using shorthand
such as $C(I,J)<w$ to mean the coalescence of the (gene) lineages of
$I$ and $J$ occurs below node $w$ in the species tree, and
$\PP_\sigma(T, C(I,J)<w)$ to mean the joint probability of the gene
tree $T$ and the event $C(I,J)<w$, we see
\begin{align}
D(i,j)&=\sum_T \PP_\sigma(T) D_T(i,j)\notag \\
&= \sum_T \PP_\sigma(T, C(I,J)<w) D_T(i,j)
+ \sum_T\PP_\sigma(T,C(I,J)>w) D_T(i,j). \label{eq:eq1}
\end{align}
But we claim
\begin{equation}
\sum_T \PP_\sigma(T, C(I,J)<w) D_T(i,j)<\sum_T \PP_\sigma(T, C(I,J)<w) D_T(i,k).\label{eq:eq2}
\end{equation}
To see this, first note that if the lineages of $I,J$ coalesce below
$w$ and a gene tree $T$ is observed, then $T$ must display the rooted
triple $((I,J),K)$.  But for such gene trees, $D_T(i,j)\le
D_T(i,k)$. Thus a weak version of inequality \eqref{eq:eq2} holds
termwise.

Note now that under the coalescent model on a species tree 
with finite edge lengths, if $C(I,J)<w$, then every labeled gene tree 
$T$ which displays $((I,J),K)$ is realizable.  In particular, since $a_\ell>a_{\ell+1}$,
and there is a gene tree displaying $((I,J),K)$ for which $\MRCA(I,J)$
is at depth $\ell+1$ or greater, and the $\MRCA(I,K)$ is at depth
$\ell$ or less, one has $D_T(i,j)< D_T(i,k)$ for some gene tree $T$
with $\PP_{\sigma}(T,C(I,J)<w)>0$. Thus at least one pair of
corresponding terms in inequality \eqref{eq:eq2} satisfies a strict
inequality, and \eqref{eq:eq2} is established.

We claim also that
\begin{equation}\sum_T\PP_\sigma(T, C(I,J)>w) D_T(i,j)=\sum_T\PP_\sigma(T, C(I,J)>w ) D_T(i,k).\label{eq:eq3}
\end{equation}
Indeed, 
up to multiplication by a positive constant,
these sums respectively give the conditional expectation of $D(i,j)$
and $D(i,k)$, given that the lineages of $I,J,K$ are distinct at
$w$.
But by property \eqref{item:anc} of the numbering scheme, and
the exchangeability of lineages under the coalescent, these conditional
expectations are the same.

From equations \eqref{eq:eq1}, \eqref{eq:eq2}, and \eqref{eq:eq3}
the inequality $D(i,j)<D(i,k)$ now follows.

To see that $D(i,k)=D(j,k)$, observe
\begin{align*}
D(i,k)&=\sum_T \PP_\sigma(T) D_T(i,k)\\
&= \sum_T \PP_\sigma(T, C(I,J)<w) D_T(i,k)+ \sum_T\PP_\sigma(T, C(I,J)>w) D_T(i,k). 
\end{align*}
However
$$\PP_\sigma(T, C(I,J)<w) D_T(i,k) =\PP_\sigma(T, C(I,J)<w) D_T(j,k),$$ 
since if the probability appearing in this equation is
non-zero then the MRCA of $I,K$ on $T$ is the same as the MRCA of
$J,K$.  
Moreover,
$$
 \sum_T\PP_\sigma(T, C(I,J)>w) D_T(i,k)=\sum_T\PP_\sigma(T, C(I,J)>w ) D_T(j,k)
$$
by an exchangability argument similar to that used in establishing
inequality \eqref{eq:eq3}. Thus $D(i,k)=D(j,k)$,
and the $3$-point condition is established.

\smallskip

For non-binary $\psi$, one must also check that if $\psi$ displays the
unresolved $3$-taxon subtree $(i,j,k)$, then
 $$D(i,j)=D(i,k)=D(j,k).$$
 This is done using exchangability of lineages, as in the 
 justification for equation \eqref{eq:eq3}.
\end{proof}

\begin{cor} \label{cor:statcon} Consider the generalized STAR method
  using any node numbering sequence satisfying properties
  \eqref{item:anc} and \eqref{item:dec}, and any method of tree
  selection that from a distance table strongly fitting a binary tree
  returns that tree, and whose output is continuous at such tables.
  Then under the multispecies coalescent model on $\sigma=(\psi,
  \lambda)$, with $\psi$ binary and $\lambda_i>0$, the method is
  statistically consistent for inferring the species tree topology
  $\psi$.\end{cor}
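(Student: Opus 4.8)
The plan is to deduce Corollary \ref{cor:statcon} from Theorem \ref{thm:dist} essentially as a consequence of the Law of Large Numbers, together with the continuity and correctness hypotheses on the tree selection method. The corollary concerns the empirical behavior of generalized STAR as the number of sampled gene trees grows, whereas Theorem \ref{thm:dist} is a statement purely about the theoretical expectation $\EE_\sigma(D_T)$; bridging this gap is the entire content of the argument.

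First I would set up the probabilistic framework. Under the multispecies coalescent on $\sigma=(\psi,\lambda)$, the sampled gene trees $T^{(1)}, T^{(2)}, \dots$ are i.i.d.\ draws from the distribution $\PP_\sigma$, and STAR forms the empirical average distance matrix $\bar D_N = \frac{1}{N}\sum_{m=1}^N D_{T^{(m)}}$. Since the node numbering sequence is fixed and there are only finitely many topological gene trees $T$ on the taxon set $\mathcal X$, each entry $D_T(i,j)$ is bounded, so the Strong Law of Large Numbers applies entrywise: almost surely, $\bar D_N(i,j)\to \EE_\sigma(D_T)(i,j)$ as $N\to\infty$, for every pair $i,j$. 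Thus almost surely $\bar D_N \to \EE_\sigma(D_T)$ in the (finite-dimensional) space of $n\times n$ symmetric matrices.

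Next I would invoke Theorem \ref{thm:dist}. Because $\psi$ is binary and every $\lambda_i>0$, the theorem guarantees that the limiting matrix $D=\EE_\sigma(D_T)$ strongly fits $\psi$; that is, $D$ lies in the set of distance tables arising from $\psi$ with \emph{positive} edge lengths. The tree selection method is assumed to return $\psi$ from any distance table strongly fitting a binary tree, and to be continuous at such tables. Continuity at $D$ means there is a neighborhood of $D$ on which the method returns the same topology it returns at $D$, namely $\psi$. (The role of \emph{strong} rather than merely weak fitting is precisely to place $D$ in the interior of the region of tables yielding $\psi$, so that a full neighborhood, and hence the tail of the convergent sequence $\bar D_N$, lands in it; this is the only subtle point.) Combining this with the almost-sure convergence $\bar D_N\to D$, we conclude that almost surely there exists $N_0$ such that for all $N\ge N_0$ the method applied to $\bar D_N$ returns $\psi$. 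This is exactly statistical consistency for the topology $\psi$.

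The main obstacle, and the reason Theorem \ref{thm:dist} was stated with \emph{strong} ultrametric fitting rather than weak, is the passage from the limit $D$ to a neighborhood of $D$: a method that correctly returns $\psi$ on tables exactly fitting $\psi$ need not do so on nearby tables unless $D$ has a neighborhood of such tables, which fails on the boundary where some edge length vanishes. Strong fitting ($\lambda_i>0$) guarantees $D$ is interior to the cone of $\psi$-tables, so the continuity hypothesis has bite and the tail of $\bar D_N$ is captured. I would note that UPGMA and Neighbor Joining satisfy the required correctness-and-continuity property, so the corollary applies to the methods of practical interest.
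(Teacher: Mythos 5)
Your proof is correct and follows essentially the same route as the paper's: apply the Law of Large Numbers entrywise to the empirical average distance matrix, invoke Theorem \ref{thm:dist} to identify the limit as a table strongly fitting $\psi$, and use the continuity-plus-correctness hypothesis on the tree selection method to conclude. The only (harmless) difference is that you use the strong law to get almost-sure convergence, where the paper uses convergence in probability; your explication of why \emph{strong} fitting is needed for the continuity hypothesis to have force is a nice clarification of a point the paper leaves implicit.
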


\begin{proof} Let $N$ be the number of sampled gene trees, and
  $\epsilon>0$.  By the law of large numbers, as $N\to \infty$, the
  probability that the average distance matrix differs from $\EE_\sigma(D_T)$
  by more than $\epsilon$ in any entry approaches 0. Thus by
  continuity of the distance method of tree selection, as $N\to
  \infty$ the probability that the inferred tree will
  have the same topology as that from $\EE_\sigma(D_T)$, $\psi$,
   approaches $1$. 
 \end{proof}

\citet{Atteson97} showed that the Neighbor Joining tree construction algorithm
satisfies the continuity
hypothesis of this corollary.  A proof that UPGMA is continuous  is straightforward.

\begin{ex}\label{Ex:extreme} As an extreme example of a node numbering
  scheme leading to consistent inference, for $n$ species let $a_i = 2$
  for $0 \le i < n-2$ and $a_{n-2} = 1$.  Interpreting the $a_i$
  values as distances from the leaves, this scheme converts every non-caterpillar gene
  tree into a completely unresolved tree, and every caterpillar gene
  tree into a gene tree with one non-trivial clade of two
  leaves.  In effect, the resulting generalized STAR method discards 
  all information gathered from non-caterpillar gene trees yet, by Theorem
  \ref{thm:dist}, the species tree topology can still be recovered.  Thus it is possible, in principle, to 
  reconstruct the species tree topology using only caterpillar gene trees 
  as input, even if the species tree is not a caterpillar.  Of course,
  we do \emph{not} recommend this for practical inference.
\end{ex}

\medskip

It is natural to focus especially on node numbering sequences that are
strictly decreasing. Indeed, if all branch lengths
 on $\sigma=(\psi,\lambda)$ are very long, then with high probability a
finite sample of topological gene trees will include only those
matching $\psi$.  If a node numbering is not strictly decreasing,
however, then an estimate of $\EE_\sigma(D_T)$ from the sample would give
$D_\psi$, which may only weakly fit $\psi$, leading to a
poorly-resolved inferred species tree under STAR.

The next theorem shows that strictly decreasing node numbering sequences lead to
better behavior of generalized STAR methods, in the sense that longer internal branches 
on the species tree do not make it more difficult to infer a fully-resolved
species tree from a fixed sample size of gene trees. Since longer
edges in a species tree increase the probability of the gene trees
showing the associated split, this is intuitively desirable, and would certainly
be a useful characteristic for data analysis.

\smallskip

Recall now that the probability of any gene tree under the coalescent is a polynomial
in the transformed species tree branch lengths, $\exp(-\lambda_i$), as
is explained, for instance, by \citet{adr2011a}. Thus for any
collection $\{D_T\}$ of matrices associated to gene trees (whether or
not it arises from a node numbering sequence), the expected value
$\EE_\sigma(D_T)$ also has entries that are polynomial in the
$\exp(-\lambda_i)$. As a result, the expected value also makes sense for a
branch length of $\infty$, by setting the transformed branch length to
$0$.  (Treating a branch length of $\infty$ in this way is
equivalent to taking a limit as the branch length goes to $\infty$.)
While an infinite branch length of course has no real biological
meaning, we will allow branch lengths in
$(0,\infty]=(0,\infty)\cup\{\infty\}$, both to simplify the
presentation of arguments, and to easily describe behavior as branches
grow long.

Note also that the strong fitting of $\psi$ by $\EE_\sigma(D_T)$ for finite
$\lambda$ is expressible by  equalities and strict inequalities
in the matrix entries, arising from the 3-point conditions. By continuity then, if some $\lambda_i\to
\infty$ the same equalities and non-strict versions of the
inequalities must hold. That is, if branch lengths are in $(0,\infty]$,
then $\EE_\sigma(D_T)$ will certainly weakly fit $\psi$.  By requiring
that the node numbering is strictly decreasing, however, this can be improved.

\begin{thm}
  Suppose a node numbering sequence is strictly decreasing. Then for
  any $\psi$ and all $\lambda_i\in(0,\infty]$, the expected distance
  matrix $\EE_\sigma(D_T)$ strongly fits $\psi$, ultrametricly.

  Moreover, if the node numbering sequence is not strictly decreasing, then
  there is a binary $\psi$ and choices of $\lambda_i\in(0,\infty]$ for which
  $\EE_\sigma(D_T)$ only weakly fits $\psi$.

\end{thm}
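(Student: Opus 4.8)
The plan is to treat the two assertions separately, leaning on the analysis already developed for Theorem~\ref{thm:dist} together with the continuity of $\EE_\sigma(D_T)$ in the transformed branch lengths $\exp(-\lambda_i)$. Recall from the discussion preceding the theorem that $\EE_\sigma(D_T)$ weakly fits $\psi$ for all $\lambda_i\in(0,\infty]$; since ultrametric fitting is governed by the $3$-point condition, for the first assertion all that remains is to upgrade the inequalities $D(i,j)\le D(i,k)=D(j,k)$ to the strict inequality $D(i,j)<D(i,k)$ whenever $\psi$ displays $((i,j),k)$, even when some $\lambda_i=\infty$. I would revisit the decomposition used in the proof of Theorem~\ref{thm:dist}: with $w=\MRCA(\{i,j,k\})$ and $u=\MRCA(\{i,j\})$ on $\sigma$, so that $u$ is a proper descendant of $w$, split $D(i,j)$ and $D(i,k)$ according to whether $C(I,J)<w$ or $C(I,J)>w$. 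The sum over $C(I,J)>w$ contributes an equality by the same exchangeability argument (property~\eqref{item:anc}) that established \eqref{eq:eq3}, and this holds verbatim for infinite branch lengths, being trivial when $\PP_\sigma(C(I,J)>w)=0$.

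The crux is the sum over $C(I,J)<w$. First note $\PP_\sigma(C(I,J)<w)>0$ for every $\lambda\in(0,\infty]$, since coalescence of $I,J$ below $w$ already has positive probability within the branch of $\sigma$ directly above $u$, whose length $\ell$ lies in $(0,\infty]$ (the probability of coalescence there is $1-e^{-\ell}>0$, equal to $1$ when $\ell=\infty$). Now every gene tree $T$ with $C(I,J)<w$ of positive probability must display $((I,J),K)$, so that $\MRCA(I,J)$ lies strictly deeper in $T$ than $\MRCA(I,K)=\MRCA(J,K)$; writing $d(I,J),d(I,K)$ for the respective depths, the strictly decreasing hypothesis forces $D_T(i,j)=2a_{d(I,J)}<2a_{d(I,K)}=D_T(i,k)$ for \emph{every} such $T$, rather than merely for one tree at a distinguished depth as in Theorem~\ref{thm:dist}. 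Consequently $\sum_T \PP_\sigma(T,C(I,J)<w)\bigl(D_T(i,k)-D_T(i,j)\bigr)>0$, since each summand with positive weight is strictly positive and the total weight is positive. Adding the two sums yields $D(i,j)<D(i,k)$, while $D(i,k)=D(j,k)$ follows exactly as in Theorem~\ref{thm:dist}, and the analogous equalities for unresolved triples on a non-binary $\psi$ follow from exchangeability. This gives the strict $3$-point condition, hence a strong ultrametric fit. The main obstacle, and the reason the hypothesis is needed, is precisely that under infinite branch lengths the gene tree distribution may concentrate and annihilate the single strict term exploited in Theorem~\ref{thm:dist}; strict monotonicity of the $a_i$ circumvents this by making every relevant term strict.

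For the second assertion, suppose the sequence is not strictly decreasing, so $a_m=a_{m+1}$ for some $0\le m\le n-2$ (with $a_{n-1}=0$). I would take $\psi$ to be the $n$-taxon caterpillar and set every internal branch length of $\sigma$ equal to $\infty$. Then with probability one each gene tree coincides with $\psi$, so $\EE_\sigma(D_T)=D_\psi$, the distance matrix obtained by applying the numbering sequence to $\psi$ itself. The edge of $\psi$ at depth $m+1$ then receives length $b_{m+1}=a_m-a_{m+1}=0$; this is an internal edge when $m\le n-3$ and is the pair of deepest pendant edges when $m=n-2$. Because the edge lengths realizing a distance matrix on a fixed binary topology are uniquely determined, no assignment of strictly positive lengths to $\psi$ can reproduce $D_\psi$, so $\EE_\sigma(D_T)$ only weakly fits $\psi$. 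I expect no difficulty here beyond verifying that making all internal branches infinite does force the gene tree to equal $\psi$ almost surely, which is immediate from the coalescent since coalescence within an infinite branch is certain.
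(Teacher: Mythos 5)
Your proof of the first claim is correct and follows essentially the same route as the paper: the paper likewise reruns the argument of Theorem~\ref{thm:dist} and notes that a strictly decreasing sequence makes the termwise inequality in \eqref{eq:eq2} strict for \emph{every} gene tree displaying $((I,J),K)$, so that no single distinguished tree of positive probability is needed. Your added observation that $\PP_\sigma(C(I,J)<w)>0$ for all $\lambda_i\in(0,\infty]$ is genuinely needed (otherwise both sides of \eqref{eq:eq2} could vanish) and is left implicit in the paper; just note that your formula $1-e^{-\ell}$ presumes exactly two lineages in that branch, though positivity is all that matters.

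The second claim is where you diverge from the paper, and there is a genuine gap. Your final step invokes ``edge lengths realizing a distance matrix on a fixed binary topology are uniquely determined.'' That is true for \emph{unrooted} binary trees, but $\psi$ here is rooted: unrooted, its root is a degree-two vertex, so only the \emph{sum} of the two root-incident edge lengths is determined, and length can be slid between them. Your argument therefore fails exactly when $m=0$, i.e.\ $a_0=a_1$ with the later inequalities strict, since the zero-length edge $b_1$ is the root's internal child edge. Indeed your conclusion is then false as stated: for $n=4$ and $(a_0,a_1,a_2)=(4,4,2)$, your construction gives the matrix with $D(x_1,x_2)=4$ and all other off-diagonal entries $8$, and this is realized on the rooted caterpillar $(((x_1,x_2),x_3),x_4)$ with all edges strictly positive (pendant lengths $2,2,4,2$ for $x_1,x_2,x_3,x_4$ and both internal edges of length $2$), so $D_\psi$ \emph{does} strongly fit $\psi$ in the paper's sense. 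For $m\ge 1$ your argument is fine, because the degenerate edge is then a genuine edge of the unrooted tree and its length is determined (and for $m=n-2$ the zero distance $D(x_1,x_2)=0$ forces zero pendants directly); the problem is solely the root.

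The repair is to argue about \emph{ultrametric} fits, which is in fact the only reading under which the theorem's second claim is true: for $n=3$ and $a_0=a_1$, the matrix $\EE_\sigma(D_T)$ is the all-equal matrix for every $\lambda$, and that matrix strongly (non-ultrametrically) fits the triplet tree, so no construction can do better under the plain reading. In an ultrametric realization the internal node heights are forced, $h(\MRCA(\{x,y\}))=D(x,y)/2$; with $a_m=a_{m+1}$ your caterpillar forces equal heights $a_m$ on the depth-$m$ and depth-$(m+1)$ nodes, hence a zero edge between them in \emph{every} ultrametric realization, covering $m=0$ as well. This is essentially what the paper's own proof does, though with a local rather than global construction: it picks any binary $\psi$ with an internal edge between depths $\ell$ and $\ell+1$, makes the surrounding branches infinite, and concludes that the three expected distances among a suitable triple $i,j,k$ all equal $2a_\ell$, which precludes a positive ultrametric fit. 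Your caterpillar construction is equally serviceable once the concluding step is rephrased in terms of forced ultrametric heights rather than uniqueness of edge lengths on a rooted tree.
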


\begin{proof}  
  The proof of the first claim follows the argument of Theorem
  \ref{thm:dist}. The primary modification is in the justification of
  inequality \eqref{eq:eq2}, when some of the internal edge lengths are
  $\lambda_i=\infty$.  But since a strictly decreasing node numbering
  implies $D_T(i,j)<D_T(i,k)$ for all gene trees $T$ displaying
  $((I,J),K)$, the inequality is immediately clear.

  For the second claim, suppose $a_\ell=a_{\ell+1}$ and pick any
  binary species tree $\psi$ which has an internal edge between nodes
  $u$ of depth $\ell$ and $v$ of depth $\ell+1$. Assign edge lengths
  of $\infty$ to all edges incident to the ancestors of $u$, to the
  edge descending from $u$ that is not incident to $v$, and to the two
  edges descending from $v$. Pick taxa $i,j$ that are descendants of
  $v$ through the two different edges, and taxon $k$ that is a
  descendant of $u$ but not of $v$.    Then regardless of the other edge
  lengths, the only gene trees realizable under the coalescent will be
  those with one of $\MRCA(I,J)$, $\MRCA(I,K)$, and $\MRCA(J,K)$ at
  depth $\ell+1$, say $\MRCA(I,J)$, and thus $\MRCA(I,K)=\MRCA(J,K)$
  located at depth $\ell$ as the parent of $\MRCA(I,J)$.
  Since $a_\ell=a_{\ell+1}$, we have
  $\EE_\sigma(D_T)(i,j)=\EE_\sigma(D_T)(i,k)=\EE_\sigma(D_T)(j,k)=2a_\ell$
  are all equal, so $\EE_\sigma(D_T)$ cannot strongly fit $\psi$.
\end{proof}

\begin{rmk}
 While we have established that properties \eqref{item:anc}
  and \eqref{item:dec} are sufficient for statistical consistency of
  STAR, we have not shown they are necessary.  In fact, addressing
  this question for the full STAR process which includes the selection
  of the species tree by some algorithm or optimality criterion would
  be difficult, since even if an expected distance matrix does not
  exactly fit a tree, the fitting processes in the distance method may
  overcome the misfit.

Avoiding the impact of the final tree selection step, one can investigate the existence of alternative schemes in which 
  distance matrices might be assigned to gene trees in such a way that for any species tree
  the expected distance matrix under the coalescent fits the
  topological species tree. Preliminary investigations indicate that
  at least for small trees such assignments exist that do not arise
  from generalized STAR numbering schemes.
\end{rmk}

The theorems above were stated and proved 
under a gene tree sampling scheme in which one individual is sampled per taxon.
We next consider extensions to multi-individual sampling.

In the case of a sampling scheme in which $k_i$ individuals are
sampled from taxon $i$ for every gene, a simple device for extending STAR
is the \emph{extended species trees}:
At leaf $i$
on the original species tree attach $k_i$ descendant edges in a
multifurcation, leading to new taxa $i_j$, $1\le j\le k_i$. These new
pendant edges need not be assigned lengths, but  
all edges arising from the original species tree retain their lengths.

The
coalescent model on the original species tree with multiple individuals
sampled per taxon then produces exactly the same distribution
of gene trees as the coalescent on the extended species tree with one sample
per taxon. Thus Theorem \ref{thm:dist} applies.

Since this extended species tree is not binary if any $k_i > 2$, Corollary
\ref{cor:statcon} does not apply directly. However, the procedure
proposed by \citet{LiuSTAR} for multi-individual sampling is to first
average over these individuals so that a distance matrix is obtained
relating the original taxa. We sketch an argument that this leads to
consistent inference. Since the average distance table for the gene trees will approximate the expected one, which by
Theorem \ref{thm:dist} strongly fits the extended species tree
ultrametrically, so will the result of further averaging this
empirical distance table over its image under all permutations of
individuals within each taxon. In this averaged table, distances
between individuals in different taxa depend only on the two taxa, and
not the individuals. Deleting all but one individual per taxon from
this table will then yield a table that in expectation strongly fits
the original species tree, and whose entries are those calculated by the
procedure of \citet{LiuSTAR}.  Thus as long as the original species
tree is binary, one still has a consistent inference scheme.
For an example of an empirical study that uses STAR with a
variable number of individuals per species, see \citet{Lee2012}.

An even more general sampling scheme might vary the number of individuals
sampled by both taxon and gene.  For instance if taxon $i$ is sampled
$k_{ij}$ times for locus $j$, we could define a sampling scheme for
locus $j$ as $\mathbf{k}_j = (k_{1j}, \dots, k_{nj})$, where $n$ is
the number of species.  Letting $\mathcal K = \{\mathbf k\}$ be the finite set of all
possible schemes used in a particular study design, if each locus has
a sampling scheme independently chosen from some distribution on
$\mathcal K$, then as the number of loci approaches infinity, the
average empirical distance table for each sampling scheme
approaches the expected distance table for that sampling scheme by the
Law of Large Numbers.  Thus, the expected distance table over $\mathcal
K$ is a weighted average over the expected distance tables for each
$\mathbf{k} \in \mathcal K$, and hence also ultrametrically fits the
species tree by the $3$-point condition. Thus the multiple-allele
version of STAR is consistent method of inference even with different
numbers of alleles sampled at different loci.
 
Finally, one could generalize further to allow different STAR node
numberings for different sampling schemes. More precisely, if $\mathbf
k_j \in \mathcal K$ specifies both a sampling scheme and numbering
scheme for locus $j$, chosen according to some distribution on
$\mathcal K$, then the STAR method will still be statistically
consistent. Of course with such an approach an increased number of loci are likely to be needed for
observed average distances to be close to their expected values.

 \section{Finite samples of loci}
 
 The statistical consistency of STAR methods, as established in the
 preceding section, is a statement about asymptotic behavior as the
 number of sampled loci approaches infinity. In this section we
 collect some observations on possible pitfalls of naively applying
 STAR when the number of sampled loci is small.
  
 \medskip
 
 We begin with issues concerning multiple samples of individuals
 within taxa, where the number of individuals varies with the locus.
 
 First, imagine a rather extreme sampling scheme in which $N$ loci are
 sampled, but that at locus $j$, taxon $i$ is sampled from $i+j$
 individuals.  Thus each sampling scheme is used for exactly one
 locus.  The statistical consistency of STAR in this situation is not
 helpful to us, since without multiple uses of every sampling scheme
 we do not expect to see asymptotic behavior.  The Law of Large
 Numbers does not apply, and the observed average distance table might
 be quite different from its expectation.

 If in a real study the number of sampling schemes used at different
 loci is very large in comparison to the number of loci,
 an appeal to the Law of Large Numbers may be similarly questionable.  For
 example, if each of $n$ species has between 1 and 3 individuals
 sampled at each locus, then there are potentially $3^n$ sampling
 schemes. We would need to sample enough loci to ensure that each of
 these exponentially many sampling schemes is used many times for the
 asymptotic claim of consistency to give us much confidence in an
 analysis.
 
 If, on the other hand, a study is designed for 3 individuals sampled
 per species and sequenced at all loci, a small amount of missing data
 might result in only a few of the $3^n$ sampling schemes being used,
 with each used repeatedly. Especially if the data were missing at
 random, it might be reasonable to use the multiple-allele versions of
 STAR.

 As another example of potentially problematic sampling, imagine
 sampling a large number of loci with only one individual per taxon,
 and a single locus with one individual for most taxa, but many
 individuals for one taxon. (Such a data set is not farfetched,
 especially if it were built from data originally collected for other
 studies.)  Then using the standard STAR numbering, the tree for this
 last locus will have a much larger number assigned to its root. If
 there is substantial incomplete lineage sorting between the
 single-sample taxa and the multi-sample one, then on this tree
 internode distances between coalescent events involving only
 single-sample taxa will be larger than for the other trees. Thus it
 appears that it might be desirable to downweight this tree in the
 distance averaging, to avoid overemphasizing the evidence for
 relationships between the single-sample taxa that it offers. On the
 other hand, if there is little incomplete lineage sorting between the
 single-sample taxa and the multi-sample one, then downweighting will
 result in an undesirable deemphasis of the evidence of relationships
 between the single-sampled taxa given by this locus.  It is simply
 not clear how the anomalously-sampled locus could be used without
 potentially biasing the analysis.

\medskip

We note too that for finite sample sizes, the standard STAR numbering
scheme and its variations may have biases for tree shape.  To illustrate this,
consider the case of sampling a single individual per taxon for four taxa.

Now suppose UPGMA is applied to the
average distance table using the standard STAR numbering with
$(a_0,a_1,a_2) = (4,3,2)$, with ties being resolved randomly.  If
there are exactly two input gene trees, then STAR-UPGMA will return a
symmetric tree in exactly the following cases: 

\begin{enumerate}

\item[(i)] both input trees are symmetric; 

\item[(ii)] one input tree is symmetric and one is asymmetric, and
they share a \emph{cherry} (a $2$-taxon clade), in which case a
symmetric tree is returned with probability 1/2; and 

\item[(iii)] both input
trees are asymmetric, and their cherries do not have overlapping taxa,
in which case a symmetric tree is returned with probability 2/3.  

\end{enumerate}

In case (iii), for instance, if the input trees are (((AB)C)D) and
(((CD)B)A), then the expected distance table averaged over both loci
is

$$\EE(D_{T})= \begin{pmatrix}0&6&7&8\\6&0&6&7\\7&6&0&6\\8&7&6&0\end{pmatrix}, $$
which results in an asymmetric tree if B and C are clustered first
(which occurs with probability 1/3), and otherwise results in a
symmetric tree.

Assuming that gene trees evolve on an unresolved species tree, 
unlabeled gene
trees have a 1/3 probability of being symmetric and 2/3 probability of
being asymmetric under the coalescent.  The probability that
STAR-UPGMA returns a symmetric tree in this situation is therefore
$$\frac{1}{3}\cdot\frac{1}{3} + 2 \cdot \frac{1}{3}\cdot\frac{2}{3}\cdot\frac{1}{3}\cdot\frac{1}{2} + \frac{2}{3}\cdot\frac{2}{3}\cdot\frac{1}{6}\cdot\frac{2}{3} = \frac{19}{81} < \frac 13,$$
indicating that there is some bias against symmetric trees in this
case.  (To understand the second summand in the expression on the left, 
for example, notice that the two trees are symmetric and asymmetric
with probability $2 \cdot \big(\frac{1}{3}\big) \big( \frac{2}{3}\big)$, that
these two trees share a cherry with probability $\frac{1}{3}$, and that
UPGMA will return the symmetric tree with probability $\frac{1}{2}$.)

While these computations show bias is present, we note 
that this analysis has discarded the edge
lengths on the UPGMA tree, even though they carry information about
when a tree is close to a different topological structure.

Now it is possible that numbering schemes can be chosen to reduce bias
when only finitely many loci are sampled.  For example, for the
numbering scheme $(a_0,a_1,a_2) = (4,3,1)$, case (iii) always results
in STAR returning a balanced tree, and the total probability of
returning a symmetric tree is $21/81$, suggesting less bias than the
standard STAR numbering, 
at least for this example.

Whether topological bias of STAR for small samples can be reduced for
all species trees simultaneously, or at least on average under some
model of species tree generation, is an interesting topic for future investigation.

\section{STAR and clades} 

The standard node numbering scheme, in which all internal edges on
gene trees are given length 1, produces distances on remetrized gene
trees that has an interpretation regarding clades on
gene trees. In this section we develop this connection.

As a consequence, we recover a previously known result that the
topological rooted species tree can be identified from the collection
of gene clade probabilities \citep{adr2011b}.  Additionally, with this
viewpoint, we obtain an efficient algorithm for the identification.

This connection between STAR and clades makes clear that whatever
information STAR uses for inference is found in the clade
probabilities, and not in the more detailed distribution of gene
trees.

\medskip

Consider a gene tree $T$ on a collection of taxa $\mathcal X$.
By a clade on $T$ we mean a subset $\mathcal C\subseteq \mathcal X$
for which the set of descendants of MRCA$(\mathcal C)$ is precisely
$\mathcal C$. There is thus a bijection between nodes $v$ of $T$ and
clades $\mathcal C$, by which $v=\text{MRCA}(\mathcal C)$. The trivial
clades are $\mathcal X$, with MRCA at the root of the tree, and
singleton subsets of $\mathcal X$, with MRCAs at the leaves. The
trivial clades occur on all gene trees on $\mathcal X$.

Given any two leaves $a,b$ of $T$, let $H_{a,b}$ denote the set of
non-trivial clades of $T$ which contain both $a$ and $b$, and let
$c_T(a,b)=|H_{a,b}|$.  Elements of $H_{a,b}$ correspond to the
non-root vertices of $T$ lying on the path from the root to the
MRCA$(\{a,b\})$. Thus $c_T(a,b)$ is simply the distance from the root
to the MRCA$(\{a,b\})$ in $T$ remetrized by the standard STAR
numbering. Since this tree is ultrametric, with all leaves $n=|X|$
from the root, one finds
\begin{equation}
c_T(a,b)=n-\frac {D_T(a,b)}2.
\label{eq:cd}
\end{equation}
This shows that, at least on individual gene trees, the distances used
in STAR are essentially counts of clades, with gene samples being judged
closer when they occur together in more clades.

\medskip

Now fixing a species tree $\sigma$, the coalescent model determines
gene tree probabilities $\PP_\sigma(T)$. If $\mathcal C$ is a
non-trivial clade, and $\PP_\sigma(\mathcal C)$ denotes the
probability of clade $\mathcal C$, then
$$\prob_\sigma(\mathcal C)=
 \sum_{T \text{ displaying } \mathcal C} \prob_\sigma (T).$$

\begin{prop}\label{prop:cladeSTAR}
Suppose $\sigma = (\psi,  \lambda)$ is an $n$-taxon rooted
binary metric species tree.  Then the clade probabilities $\{\prob_\sigma(\mathcal C)\}$
determine $\EE_\sigma(D_T)$.
\end{prop}

\begin{proof}   Define two indicator functions:
$$
I_{\mathcal C} (T) = \begin{cases} 1 &\text{if $T$ displays $\mathcal C$,}\\ 0 &\text{otherwise,} \end{cases}
\hskip 1cm \text{and} \hskip 1cm
I_{a,b} (\mathcal C) = \begin{cases} 1 &\text{if $a,b\in\mathcal C$, }\\ 0 &\text{otherwise.} \end{cases}
$$

\medskip

Then using equation \eqref{eq:cd},
\begin{align}
\EE_\sigma(D_T(a,b)) &= \sum_{T} \prob_\sigma (T) \, 2(n - c_T(a,b))\notag\\
&=2n - 2 \sum_{T} \prob_\sigma (T) \, c_T(a,b)\notag\\
&=2n - 2 \sum_{T} \prob_\sigma (T) \bigg(\sum_{\text{non-trivial} \atop \text{clades } \mathcal C}  I_{\mathcal C}(T) I_{a,b} (\mathcal C) \bigg)\notag\\
&=2n - 2  \sum_{\text{non-trivial} \atop \text{clades } \mathcal C} \bigg( \sum_{T} \prob_\sigma (T) \, I_{\mathcal C}(T) \bigg) I_{a,b} (\mathcal C)\notag\\
&=2n - 2  \sum_{\text{non-trivial} \atop \text{clades } \mathcal C} \prob_\sigma (\mathcal C)  I_{a,b} (\mathcal C),\label{E:cprob}
\end{align}
and the expected STAR distance between $a$ and $b$ is computable from clade probabilities. 
\end{proof}

\begin{rmk}
  Though this statement and proof involve theoretical distributions
  and expected STAR distances, a similar statement and argument also
  apply to empirical clade frequencies and empirical STAR distances
  computed from a gene tree sample.
\end{rmk}

From Theorem \ref{thm:dist} we immediately obtain the following:

\begin{cor}
  The rooted species tree topoogy $\psi$ is identifiable from clade
  probabilities under the multispecies coalescent.
\end{cor}

\medskip

\begin{rmk} Proposition \ref{prop:cladeSTAR} also suggests an
  efficient algorithm for computing $\psi$ from $\{\prob ( \mathcal C
  )\}$, or an estimate of $\psi$ from estimates of $\{\prob ( \mathcal
  C )\}$.
\end{rmk}

We note that the program BUCKy \citep{bca2007} uses estimated clade
probabilities to construct a {\it concordance tree} from the clades
with highest estimated probabilities one clade at a time using a
greedy consensus approach.  Motivated by the observation that this
method can be misleading \citep{Degnan-etal-2009}, a quartet version
of BUCKy was developed that builds the species tree from the most
supported quartets \citep{Larget2010}.  However, the observation that
STAR distances can be computed from clade probabilities suggests an
alternative method for computing the concordance tree using Equation
\eqref{E:cprob} that would be consistent under the multispecies
coalescent.

\section{Discussion}

The main results of this paper are that the STAR algorithm and its
generalizations to other node numbering schemes, using any
well-behaved distance method for the selection of the species tree, do
provide statistically consistent methods of species tree inference. As
demonstrated by \citet{LiuSTAR}, the method is also fast and, at
least in simulations, exhibited good performance in comparison
to some other methods of species tree inference.

While our focus here has been theoretical, our work indicates several
insights that may be helpful in improving practical data analysis.
First, because the expected distance tables under STAR are
ultrametric, it may be preferable to use UPGMA rather than Neighbor
Joining as the tree building method in STAR, as UPGMA enforces an
ultrametric assumption while NJ does not. Alternatively, if NJ is
used, then even though the reconstructed distances on the inferred
species tree are ultimately discarded, it might be worthwhile to
first examine them to see if they roughly give ultrametricity. If not,
then one might doubt either the fit of the multispecies coalescent
model to the gene tree data, or that one had a sufficiently large
sample of gene trees for adequate inference. One should also consider
using a more elaborate distance method based on an optimality
criterion, such as Minimum Evolution, although this would require a
tree search and therefore should slow runtimes. Fast
heuristic searches under such a criterion, possibly adapted to enforce
ultrametricity, might improve STAR's performance. Simulation studies
are needed to further investigate these issues.

Second, the generalizations of the STAR method using alternate
numbering schemes also need further investigations. Although all such
schemes lead to consistent inference, this is an asymptotic statement
that concerns behavior with large samples. It is still possible that
some numbering schemes lead to more efficient inference, in that with
a small sample of gene trees they are more likely to return the
correct species tree, or one close to it. Understanding this issue,
perhaps for specific species trees, or better still for specific
models of species tree generation, would be desirable.

Even without more detailed theoretical results on the behavior of the
generalized STAR methods, they might still be useful in data analysis.
For instance, for a fixed sample of gene trees, one could repeatedly
infer a species tree using randomly chosen node numbering schemes
satisfying the criteria in Section \ref{sec:consistnode}. If all such
species trees have the same topology, one might be more confident in
the result. Significant differences in the inferred topologies might
lead one to doubt the validity of all the trees, again due to either
poor model fit of the coalescent or inadequate sample size. More
minor variations in the inferred trees might suggest which features of
the species tree one might be confident of, based on their shared
characteristics. Because of the computational speed of STAR,
calculations with many random node numberings should be
feasible.

\section{Acknowledgements}  
The work of E.~Allman and J.~Rhodes was
supported by the U.S. National Science Foundation (DMS
0714830), and that of J.~Degnan 
by the New Zealand Marsden Fund and the National Institute for
Mathematical and Biological Synthesis, an institute sponsored by the National Science Foundation, the
U.S. Department of Homeland Security, and the U.S. Department of Agriculture through NSF Award
\#EF-0832858, with additional support from the University of Tennessee, Knoxville.

\bibliographystyle{apalike}
\bibliography{STAR_node_numbering}

\begin{thebibliography}{}

\bibitem[Allman et~al., 2011a]{adr2011b}
Allman, E.~S., Degnan, J.~H., and Rhodes, J.~A. (2011a).
\newblock Determining species tree topologies from clade probabilities under
  the coalescent.
\newblock {\em J. Theor. Biol.}, 289:96--106.

\bibitem[Allman et~al., 2011b]{adr2011a}
Allman, E.~S., Degnan, J.~H., and Rhodes, J.~A. (2011b).
\newblock Identifying the rooted species tree from the distribution of unrooted
  gene trees under the coalescent.
\newblock {\em J. Math. Biol.}, 62(6):833--862.

\bibitem[An\'e et~al., 2007]{bca2007}
An\'e, C., Larget, B., Baum, D., Smith, S., and Rokas, A. (2007).
\newblock Bayesian estimation of concordance among gene trees.
\newblock {\em Mol. Biol. Evol.}, 24:412--426.

\bibitem[Atteson, 1997]{Atteson97}
Atteson, K. (1997).
\newblock The performance of the neighbor-joining method of phylogeny
  reconstruction.
\newblock In {\em Mathematical hierarchies and biology ({P}iscataway, {NJ},
  1996)}, volume~37 of {\em DIMACS Ser. Discrete Math. Theoret. Comput. Sci.},
  pages 133--147. Amer. Math. Soc., Providence, RI.

\bibitem[Castillo-Ram\'irez et~al., 2010]{BESTbookChap}
Castillo-Ram\'irez, S., Liu, L., Pearl, D., and Edwards, S. (2010).
\newblock Bayesian estimation of species trees: {A} practical guide to optimal
  sampling and analysis.
\newblock In Knowles, L. and Kubatko, L., editors, {\em Estimating Species
  Trees: Practical and Theoretical Aspects}, pages 15--33. Wiley-Blackwell,
  College Station, Texas.

\bibitem[Degnan et~al., 2009]{Degnan-etal-2009}
Degnan, J.~H., DeGiorgio, M., Bryant, D., and Rosenberg, N.~A. (2009).
\newblock Properties of consensus methods for inferring species trees from gene
  trees.
\newblock {\em Syst. Biol.}, 58(1):35--54.

\bibitem[Heled and Drummond, 2010]{heled2010}
Heled, J. and Drummond, A.~J. (2010).
\newblock Bayesian inference of species trees from multilocus data.
\newblock {\em Mol. Biol. Evol.}, 27:570--580.

\bibitem[Kubatko et~al., 2009]{kubatko2009}
Kubatko, L.~S., Carstens, B.~C., and Knowles, L.~L. (2009).
\newblock {STEM}: species tree estimation using maximum likelihood for gene
  trees under coalescence.
\newblock {\em Bioinformatics}, 25(7):971--973.

\bibitem[Larget et~al., 2010]{Larget2010}
Larget, B.~R., Kotha, S.~K., Dewey, C.~N., and An\'e, C. (2010).
\newblock {BUCK}y: Gene tree / species tree reconciliation with {B}ayesian
  concordance analysis.
\newblock {\em Bioinformatics}, 26:2910--2911.

\bibitem[Lee et~al., 2012]{Lee2012}
Lee, J.~Y., Joseph, L., and Edwards, S.~V. (2012).
\newblock A species tree for the {Australo-Papuan} fairy-wrens and allies
  ({A}ves: {M}aluridae).
\newblock {\em Syst. Biol.}, 61:253--271.

\bibitem[Liu and Pearl, 2007]{liu2007}
Liu, L. and Pearl, D.~K. (2007).
\newblock Species trees from gene trees: {R}econstructing {B}ayesian posterior
  \ distributions of a species phylogeny using estimated gene tree
  distributions.
\newblock {\em Syst. Biol.}, 56:504--514.

\bibitem[Liu et~al., 2009]{LiuSTAR}
Liu, L., Yu, L., Pearl, D.~K., and Edwards, S.~V. (2009).
\newblock Estimating species phylogenies using coalescence times among
  sequences.
\newblock {\em Syst. Biol.}, 58:468--477.

\bibitem[Rannala and Yang, 2003]{rannala2003}
Rannala, B. and Yang, Z. (2003).
\newblock Bayes estimation of species divergence times and ancestral population
  s\ izes using {DNA} sequences from multiple loci.
\newblock {\em Genetics}, 164:1645--1656.

\bibitem[Semple and Steel, 2003]{SempleSteel}
Semple, C. and Steel, M. (2003).
\newblock {\em Phylogenetics}, volume~24 of {\em Oxford Lecture Series in
  Mathematics and its Applications}.
\newblock Oxford University Press, Oxford.

\end{thebibliography}

\end{document}